\numberwithin{equation}{section}
\newtheorem{Theorem}{Theorem}[section]
\newtheorem{Lemma}[Theorem]{Lemma}
\newtheorem{Proposition}[Theorem]{Proposition}
 { \theoremstyle{definition}
\newtheorem{Remark}[Theorem]{Remark} }
\newcommand{\BZ}{{\mathbb Z}}
\newcommand{\BC}{{\mathbb C}}
\newcommand{\sfrac}[2]{{#1}/{#2}}
\def\o#1{\overline{#1}}
\def\u#1{\underline{#1}}
\begin{document}

\allowdisplaybreaks

\newcommand{\arXivNumber}{1706.05155}

\renewcommand{\thefootnote}{}

\renewcommand{\PaperNumber}{069}

\FirstPageHeading

\ShortArticleName{An Elliptic Garnier System from Interpolation}

\ArticleName{An Elliptic Garnier System from Interpolation\footnote{This paper is a~contribution to the Special Issue on Elliptic Hypergeometric Functions and Their Applications. The full collection is available at \href{https://www.emis.de/journals/SIGMA/EHF2017.html}{https://www.emis.de/journals/SIGMA/EHF2017.html}}}

\Author{Yasuhiko YAMADA}

\AuthorNameForHeading{Y.~Yamada}

\Address{Department of Mathematics, Kobe University, Rokko, Kobe 657-8501, Japan}
\Email{\href{mailto:yamaday@math.kobe-u.ac.jp}{yamaday@math.kobe-u.ac.jp}}

\ArticleDates{Received June 20, 2017, in f\/inal form August 30, 2017; Published online September 02, 2017}

\Abstract{Considering a certain interpolation problem, we derive a series of elliptic dif\/fe\-ren\-ce isomonodromic systems together with their Lax forms. These systems give a multivariate extension of the elliptic Painlev\'e equation.}

\Keywords{elliptic dif\/ference; isomonodromic systems; Lax form; interpolation problem}

\Classification{39A13; 33E05; 33E17; 41A05}

\renewcommand{\thefootnote}{\arabic{footnote}}
\setcounter{footnote}{0}

\section{Introduction}
There is a simple way to derive isomonodromic equations by studying suitable Pad\'e approximation or interpretation problem. It has been applied various examples both continuous and discrete (see \cite{NY16, Yamada09} and references therein). The aim of this paper is to apply this method to certain elliptic interpolation problems and derive a multivariate extension of the elliptic-dif\/ference\footnote{The $q$-dif\/ference limit of the obtained system is expected to be the one considered in \cite{OR16q}.} Painlev\'e equation \cite{ORG, Sakai}. This work is a natural generalization of \cite{NTY13}.

Recently, there have been some progress in multivariate elliptic isomonodromic systems. In~\cite{OR16, Rains11}, an elliptic analog of the Garnier system is constructed. In \cite{NY16}, an elliptic deformation of $q$-Garnier system is suggested from a geometric points of view. In \cite{ND}, certain elliptic analog of Garnier system is obtained from viewpoint of lattice equations. Moreover, a general framework of elliptic isomonodromic systems is established in~\cite{Rains16}. For the equations obtained in this paper, the proper isomonodromic interpretation and the relation to the constructions mentioned above are not clear so far. However, since the equations obtained in this paper are quite explicit, we expect that they will give a clue to elucidate the multivariate elliptic isomonodromic systems.

The paper is organized as follows. In Section~\ref{section2}, we set up our interpolation problem (\ref{eq:Pade-prob}): $\psi(z)\sim \frac{P(z)}{Q(z)}$. In Section~\ref{section3}, we derive two contiguous relations satisf\/ied by the interpolants $P(z)$ and $\psi(z)Q(z)$ (Theorem~\ref{th:cont}). These relations play the role of the Lax pair for the isomonodromic system. In Section~\ref{section4}, we analyze the Lax equations and derive the isomonodromic system as the necessary and suf\/f\/icient conditions for the compatibility (Theorem~\ref{th:44}). The proof becomes quite simple due to the use of the contiguous type Lax pair.

\section{Set up of the interpolation problem}\label{section2}
Fix $p, q \in \BC$ such that $|p|, |q|<1$. The elliptic Gamma function $\Gamma_{p,q}(z)$~\cite{Rui} and the theta function~$[z]$ (of base~$p$) are def\/ined as
\begin{gather*}
 \Gamma_{p,q}(z)=\frac{(pq/z;p,q)_\infty}{(z;p,q)_\infty}=\prod_{i,j\geq 0}\frac{1-z^{-1}p^{i+1}q^{j+1}}{1-z p^iq^j},\\
[z]=(z,p/z;p)_{\infty}=\prod_{i=0}^{\infty}\big(1-x p^i\big)\big(1-x^{-1}p^{i+1}\big).
\end{gather*}
They satisfy the following fundamental relations:
\begin{gather*}
\Gamma_{p,q}(q z)=[z]\Gamma_{p,q}(z),\qquad \Gamma_{p,q}(z/q)=[z/q]^{-1}\Gamma_{p,q}(z),\\
[p z]=-z^{-1}[z], \qquad [z]=-z [1/z],\qquad [z]=[p/z].
\end{gather*}
We also use the following notations:
\begin{gather*}
[z]_s=\frac{\Gamma_{p,q}(q^s z)}{\Gamma_{p,q}(z)}, \qquad[x_1,\ldots,x_l]_{s}=[x_1]_s \cdots [x_l]_s.
\end{gather*}
In particular, $ [z]_s=\prod\limits_{i=0}^{s-1}[q^iz]$ for $s \in \BZ_{\geq 0}$.

Fix $N \in \BZ_{\geq 2}$. Let $k, u_1, \ldots, u_{2N}$ be complex parameters satisfying a constraint
$\prod\limits_{i=1}^{2 N} u_i=k^N$, and def\/ine a function $\psi(z)$ as\footnote{Throughout the paper, any expression $a\cdots b/c\cdots d$ means the long fraction $\frac{a\cdots b}{c\cdots d}$.}
\begin{gather*}
\psi(z)=\prod_{i=1}^{2N}\frac{\Gamma_{p,q}(u_i/z)}{\Gamma_{p,q}(k/u_i z)}.
\end{gather*}
We also def\/ine a shift $T\colon x \mapsto \o{x}$ of parameters $x=k, u_i$ as
\begin{gather*}
\o{k}=k/q, \qquad \o{u_i}= \begin{cases}
u_i, & 1\leq i\leq N,\\
u_i/q, & N< i\leq 2N.
\end{cases}
\end{gather*}
This action is naturally extended to any functions $f=f(k, u_i)$ of parameters by $\o{f}=f(\o{k}, \o{u_i})$.

We put
\begin{gather*}%\label{eq:ell-MK}
\mu_1(z)=\frac{\psi(z/q)}{\psi(z)}= \prod_{i=1}^{2 N}\frac{[u_i/z]}{[k/u_iz]},\qquad
\mu_2(z)=\frac{\o{\psi}(z/q)}{\psi(z)}=\prod_{i=1}^N\frac{ [u_i/z]}{[k/u_{N+i}z]},\\
\mu_3(z)=\frac{\o{\psi}(z)}{\psi(z)}=\prod_{i=1}^N\frac{ [k/qu_iz]}{[u_{N+i}/qz]}.
\end{gather*}
These functions are $p$-periodic: $f(pz)=f(z)$, and satisfy
\begin{gather}\label{eq:MK-rel}
\mu_1(k/z)=\mu_1(z)^{-1}, \qquad \mu_3(k/qz)=\mu_2(z),
\end{gather}
due to the constraint $\prod\limits_{i=1}^{2 N} u_i=k^N$.

Let $f(z)$ be an elliptic function of degree $2d$ such that $p$-periodic and $h$-symmetric: $f(h/z)=f(z)$. Any such function can be written as $f(z)=\frac{\theta_{\rm num}(z)}{\theta_{\rm den}(z)}$, where $\theta_{\ast}(z)$ ($\ast={\rm num}, {\rm den})$ are $h$-sym\-metric entire function with common quasi periodicity: $\theta_{\ast}(p z)=(h/p z^2)^d \theta_{\ast}(z)$. The totality of such functions $f(z)$ form a linear space of dimension $d+1$.

For $m,n \in \BZ_{\geq 0}$, consider the interpolation problem\footnote{This is a kind of PPZ (prescribed poles and zeros) interpolation \cite{Zhe}.}
\begin{gather}\label{eq:Pade-prob}
\psi(q^{-s})=\frac{P(q^{-s})}{Q(q^{-s})}, \qquad s=0,1,\dots ,N=m+n,
\end{gather}
where \looseness=-1 $P(z)$ (resp.~$Q(z)$) are $k/q$-symmetric and $p$-periodic elliptic functions of order~$2m$ (resp.\ $2n$), with specif\/ied denominators $P_{\rm den}(z)$ (resp.~$Q_{\rm den}(z)$). For convenience, we will choose them~as
\begin{gather*}
P_{\rm den}(z)=[u_1/z,u_1q z/k]_m,\qquad Q_{\rm den}(z)=[k/u_2 z,q z/u_2]_n.
\end{gather*}

\section{Derivation of the contiguous relations}\label{section3}
Let $P(z)$, $Q(z)$ be solutions for the interpolation problem~(\ref{eq:Pade-prob}). We will compute the contiguous relations satisf\/ied by the functions $w(z)=P(z)$, $\psi(z)Q(z)$:
\begin{gather}
L_2\colon \ D_3(\sfrac{z}{q})w(z)-D_2(z)w(\sfrac{z}{q})+D_1(z)\o{w}(\sfrac{z}{q})=0,\nonumber\\
L_3\colon \ \o{D_1}(z)w(z)-D_2(z)\o{w}(z)+D_3(z)\o{w}(\sfrac{z}{q})=0.\label{eq:L23D}
\end{gather}
The coef\/f\/icients are determined by the Casorati determinants as
\begin{gather*}
D_1(z)=|{\bf u}(z), {\bf u}(z/q)|,\qquad D_2(z)=|{\bf u}(z), \o{\bf u}(z/q)|,\\
D_3(z)=|{\bf u}(z), \o{\bf u}(z)|, \qquad {\bf u}(z)=\left[\begin{matrix} P(z)\\\psi(z)Q(z)\end{matrix}\right].
\end{gather*}
Certain explicit formulas for $P(z)$, $Q(z)$ are available (see Remark~\ref{remark3.4}), however, we do not need them for the computations here.

\begin{Lemma} \label{lemma:X1} We have
\begin{gather*}%\label{eq:XF}
D_1(z)=\psi(z) \frac{[z,k/z]_{m+n} \big[k/z^2\big]}{X_{1, \rm den}(z)} F(z),
\end{gather*}
where $X_{1, \rm den}(z)$ is given in equation~\eqref{eq:X1d} below, and $F(z)$ is a $k$-symmetric $p$-quasi periodic entire function of degree $2N-4$. Explicitly, we have
\begin{gather}\label{eq:F}
F(z)=C z \prod_{i=1}^{N-2} [z/\lambda_i, k/z\lambda_i],
\end{gather}
where $C, \lambda_1,\ldots, \lambda_{N-2}$ are some constants independent of $z$.
\end{Lemma}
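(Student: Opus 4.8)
The plan is to divide out the transcendental factor at once and reduce the statement to an identity between elliptic functions. Expanding the Casoratian and using $\psi(\sfrac{z}{q})=\mu_1(z)\psi(z)$, we may write $D_1(z)=\psi(z)X_1(z)$ with
\[
X_1(z)=P(z)\mu_1(z)Q(\sfrac{z}{q})-P(\sfrac{z}{q})Q(z).
\]
Since $P$, $Q$ and $\mu_1$ are all $p$-periodic, $X_1$ is a genuine $p$-elliptic function, and the whole $z$-dependence through $\psi$ has been split off. It then remains to identify $X_1$ by its zeros, poles and quasi-periodicity.

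First I would record the reflection law that governs everything. Because $P$ and $Q$ are $\o{k}$-symmetric we have $P(k/z)=P(\sfrac{z}{q})$, $P(\sfrac{k}{qz})=P(z)$ and likewise for $Q$; combining this with $\mu_1(k/z)=\mu_1(z)^{-1}$ from \eqref{eq:MK-rel}, a one-line substitution yields
\[
X_1(z)=-\mu_1(z)\,X_1(k/z).
\]
The zeros are then read off in three batches. The interpolation conditions \eqref{eq:Pade-prob} say precisely that the two entries of ${\bf u}(q^{-s})$ are proportional for $s=0,\dots,N$, so the Casoratian vanishes whenever both $z$ and $\sfrac{z}{q}$ are nodes, i.e.\ at $z=q^{-s}$ for $s=0,\dots,N-1$; this gives the factor $[z]_{m+n}$ (recall $N=m+n$). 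The reflection law sends these to mirror zeros at $z=kq^{s}$, producing $[k/z]_{m+n}$. Finally, at the four fixed points of $z\mapsto k/z$ the balancing condition $\prod_{i=1}^{2N}u_i=k^N$ forces $\mu_1=1$, so the reflection law degenerates to $X_1=-X_1$ and yields the last factor $\big[k/z^2\big]$.

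It then remains to handle the poles and to count. Collecting the poles of $X_1$ coming from $\mu_1(z)$ and from $P(z),P(\sfrac{z}{q}),Q(z),Q(\sfrac{z}{q})$, while tracking the cancellations produced by the zeros of $\mu_1$ at $z=u_i$ and by the chosen denominators $P_{\rm den},Q_{\rm den}$, determines $X_{1,\rm den}(z)$ in \eqref{eq:X1d} and shows $X_{1,\rm den}(k/z)=\mu_1(z)X_{1,\rm den}(z)$, consistently with the reflection law. Setting
\[
F(z):=\frac{X_1(z)\,X_{1,\rm den}(z)}{[z,k/z]_{m+n}\,\big[k/z^2\big]},
\]
the factors just found cancel all poles and all forced zeros, so $F$ is entire; computing the residual quasi-periodicity fixes its degree at $2N-4$, and the reflection law forces $F(k/z)=(k/z^2)F(z)$. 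The general entire solution of this functional equation of degree $2N-4$ is $F(z)=Cz\prod_{i=1}^{N-2}[z/\lambda_i,k/z\lambda_i]$, which is \eqref{eq:F}.

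The main obstacle I anticipate is the pole bookkeeping in the last step: one must check that the naive poles of the two terms of $X_1$, after the cancellations against the numerator zeros of $\mu_1$ and the overlaps forced by $P_{\rm den}$ and $Q_{\rm den}$, assemble into exactly $X_{1,\rm den}$ and nothing more, so that the degree count delivering $2N-4$ is an equality rather than a bound. By contrast, once the reflection law is in hand the zero-counting is essentially immediate.
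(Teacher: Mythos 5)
Your proof follows the paper's argument essentially verbatim: split off $\psi(z)$ to get the $p$-elliptic function $X_1(z)=\mu_1(z)P(z)Q(z/q)-P(z/q)Q(z)$, derive the reflection law from the $k/q$-symmetry of $P$, $Q$ together with equation~\eqref{eq:MK-rel}, read off the forced zeros $[z,k/z]_{m+n}\big[k/z^2\big]$ from the interpolation conditions and the (anti)symmetry, and then fix $F(z)$ by the pole bookkeeping and degree count --- your explicit fixed-point computation (balancing $\prod u_i=k^N$ forcing $\mu_1=1$ at $z^2\in kp^{\BZ}$) is just the paper's ``antisymmetric, hence divisible by $\big[k/z^2\big]$'' step spelled out. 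Incidentally, your version of the denominator relation, $X_{1,\rm den}(k/z)=\mu_1(z)X_{1,\rm den}(z)$, is the one actually required for $X_{1,\rm num}(k/z)=-X_{1,\rm num}(z)$, and direct computation from \eqref{eq:X1d} confirms it; the paper's printed $X_{1,\rm den}(k/z)=\mu_1(z)^{-1}X_{1,\rm den}(z)$ is a typo.
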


\begin{proof} We put
\begin{gather*}
X_1(z)=\frac{D_1(z)}{\psi(z)}=\mu_1(z)P(z)Q(z/q)-P(z/q)Q(z).
\end{gather*}

(i) Obviously $X_1(z)$ is a $p$-periodic function. Due to the cancellations of the factors $[u_1\!/\!z][u_2\!/\!z]$, the denominator $X_{1, \rm den}(z)$ of $X_1(z)$ consists of $2(N+m+n)$ theta factors, hence $X_1(z)$ is of degree $2(N+m+n)$. We choose the normalization of $X_{1, \rm den}(z)$ as
\begin{gather}\label{eq:X1d}
X_{1, \rm den}(z)=\left(\prod_{i=1}^{N}[k/u_iz][u_{N+i}z/k]\right) [qu_1/z,qu_1 z/k]_{m}[qk/u_2z,qz/u_2]_{n},
\end{gather}
so that $X_{1, \rm den}(k/z)=\mu_1(z)^{-1}X_{1, \rm den}(z)$.

(ii) Due to the $k/q$-symmetry of $P(z)$, $Q(z)$ and equation~\eqref{eq:MK-rel}, we have
\begin{gather*}
X_1(k/z)=\mu_1(k/z)P(k/z)Q(k/qz)-P(k/qz)Q(k/z)\\
\hphantom{X_1(k/z)}{} =\mu_1(z)^{-1}P(z/q)Q(z)-P(z/q)Q(z)=-\mu_1(z)^{-1}X_1(z).
\end{gather*}
Combining this and $X_{1, \rm den}(k/z)=\mu_1(z)^{-1}X_{1, \rm den}(z)$, we see that the numerator $X_{1,\rm num}(z)$ is $k$-antisymmetric: $X_{1, \rm num}(k/z)=-X_{1, \rm num}(z)$.

(iii) By the Pad\'e interpolation condition, we have $D_1(q^{-s})=0$ for $s=0,1,\ldots, m+n-1$. Hence $X_{1, \rm num}(z)$ is divisible by $[z,k/z]_{m+n} \big[k/z^2\big]$.

From (i)--(iii), one obtain the desired result.\end{proof}

\begin{Lemma} \label{lemma:X23} We have
\begin{gather*}%\label{eq:XG}
{D_2(z)}=\psi(z)\frac{[z]_{m+n}[k/qz]_{m+n+1}}{X_{2, \rm den}(z)} G(z),\\
{D_3(z)}=\psi(z)\frac{[z]_{m+n+1}[k/qz]_{m+n}}{X_{3, \rm den}(z)} G(k/qz),
\end{gather*}
where $X_{2, \rm den}(z)$, $X_{3, \rm den}(z)$ is given in equation~\eqref{eq:X23d} below, $C'$ is a constant, and $G(z)$ is a~$p$-quasi periodic function of degree $N-1$ which can be written as
\begin{gather}\label{eq:G}
G(z)=C' z \prod_{i=1}^{N-1}[z/\xi_i], \qquad \prod_{i=1}^{N-1}\xi_i=\frac{k q^{n-1}}{\prod\limits_{i=1}^N u_i}.
\end{gather}
\end{Lemma}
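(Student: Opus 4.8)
The plan is to run, in parallel for $D_2$ and $D_3$, the same three steps as in the proof of Lemma~\ref{lemma:X1}, and to tie the two determinants together by the relation $\mu_3(k/qz)=\mu_2(z)$ of~\eqref{eq:MK-rel}. First I would set
\begin{gather*}
X_2(z)=\frac{D_2(z)}{\psi(z)}=\mu_2(z)P(z)\o{Q}(\sfrac{z}{q})-Q(z)\o{P}(\sfrac{z}{q}),\qquad
X_3(z)=\frac{D_3(z)}{\psi(z)}=\mu_3(z)P(z)\o{Q}(z)-Q(z)\o{P}(z),
\end{gather*}
each of which is $p$-periodic because every factor is. A degree count, completely parallel to part~(i) of Lemma~\ref{lemma:X1}, is then needed: after clearing denominators, certain zeros of $\mu_2$ and $\mu_3$ cancel poles of the neighbouring factors (just as the factors $[u_1/z][u_2/z]$ cancelled there), which leaves $X_{2,\rm den}$, $X_{3,\rm den}$ as in~\eqref{eq:X23d} and a numerator of degree $3N$ in each case.

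The one genuinely new ingredient is a symmetry exchanging the two Casoratians. Since $P$ and $Q$ are $k/q$-symmetric we have $P(k/qz)=P(z)$, $Q(k/qz)=Q(z)$, whereas $\o{P}$ and $\o{Q}$ are $\o{k}/q=k/q^2$-symmetric, so $\o{P}(k/qz)=\o{P}(\sfrac{z}{q})$, $\o{Q}(k/qz)=\o{Q}(\sfrac{z}{q})$. Feeding these together with $\mu_3(k/qz)=\mu_2(z)$ into the defining expressions gives the clean identity
\begin{gather*}
X_3(k/qz)=X_2(z).
\end{gather*}
I would fix the normalization in~\eqref{eq:X23d} so that in addition $X_{3,\rm den}(k/qz)=X_{2,\rm den}(z)$; the two numerator problems are then interchanged by $z\mapsto k/qz$, which is precisely why the second factor in the $D_3$ formula is $G(k/qz)$ and not a new function. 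In particular, once the $D_2$ formula is established the $D_3$ formula follows by substituting $w=k/qz$.

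For the zeros I would use the interpolation condition in the form ${\bf u}(q^{-s})\parallel(1,1)$, which holds because $P(q^{-s})=\psi(q^{-s})Q(q^{-s})$, together with its barred analogue $\o{\bf u}(q^{-s})\parallel(1,1)$, valid for $s=0,\dots,N$. In $D_2=|{\bf u}(z),\o{\bf u}(\sfrac{z}{q})|$ both columns are proportional to $(1,1)$ for $s=0,\dots,m+n-1$, and in $D_3=|{\bf u}(z),\o{\bf u}(z)|$ for $s=0,\dots,m+n$; hence $X_{2,\rm num}$ is divisible by $[z]_{m+n}$ and $X_{3,\rm num}$ by $[z]_{m+n+1}$. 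Carrying these divisibilities across the identity $X_3(k/qz)=X_2(z)$ supplies the complementary factors $[k/qz]_{m+n+1}$ in $X_{2,\rm num}$ and $[k/qz]_{m+n}$ in $X_{3,\rm num}$. Removing these (of total degree $2N+1$) from the degree-$3N$ numerator leaves a $p$-quasi periodic remainder of degree $N-1$, which must therefore have the form $G(z)=C'z\prod_{i=1}^{N-1}[z/\xi_i]$; the product $\prod_i\xi_i$ in~\eqref{eq:G} is read off from the $z$-independent part of the quasi periodicity multiplier of $X_{2,\rm num}$, i.e.\ from matching the automorphy factors of numerator and denominator under $z\mapsto pz$.

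The main obstacle is the bookkeeping in step~(i): tracking all cancellations so that the denominators come out exactly as in~\eqref{eq:X23d} and the numerator degree is precisely $3N$, and then pinning down the constant $\prod_i\xi_i$ from the quasi periodicity. By contrast the conceptual core, the identity $X_3(k/qz)=X_2(z)$, is cheap once noticed, and it is what collapses two apparently independent computations into one while simultaneously explaining the pairing of the $[z]_{\bullet}$ and $[k/qz]_{\bullet}$ factors.
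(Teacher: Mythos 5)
Your proposal is correct and follows essentially the same route as the paper's proof: the same Casoratians $X_2(z)$, $X_3(z)$, and the same three ingredients, namely $p$-periodicity with the denominators of~\eqref{eq:X23d} (degree $N+2m+2n=3N$), the key symmetry $X_3(k/qz)=X_2(z)$ together with $X_{3,\rm den}(k/qz)=X_{2,\rm den}(z)$, and the interpolation zeros $D_2(q^{-s})=0$ for $s=0,\dots,m+n-1$, $D_3(q^{-s})=0$ for $s=0,\dots,m+n$. The only difference is that you make explicit the final assembly that the paper compresses into ``from (i)--(iii) we obtain the desired results'': transporting the zero sets across the involution $z\mapsto k/qz$ to get the complementary factors $[k/qz]_{m+n+1}$, $[k/qz]_{m+n}$, and fixing $\prod_i \xi_i$ by matching quasi-periodicity multipliers.
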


\begin{proof} We put
\begin{gather*}
X_2(z)=\frac{D_2(z)}{\psi(z)}=\mu_2(z)P(z)\o{Q}(z/q)-\o{P}(z/q)Q(z),\\
X_3(x)=\frac{D_3(z)}{\psi(z)}=\mu_3(z)P(z)\o{Q}(z)-\o{P}(z)Q(z).
\end{gather*}

(i) Obviously $X_2(z)$, $X_3(z)$ are $p$-periodic elliptic functions. The denominators can be written as
\begin{gather}
 X_{2, \rm den}(z)=\prod_{i=N+1}^{2N}[u_iz/k] [qu_1/z,q u_1z/k]_m[q z/u_2, k/u_2 z]_n, \nonumber\\
 X_{3, \rm den}(z)=\prod_{i=N+1}^{2N}[u_i/q z] \big[u_1/z,q^2 u_1z/k\big]_m[q z/u_2, k/u_2 z]_n.\label{eq:X23d}
\end{gather}
Hence $X_2(z)$, $X_3(z)$ are both of degree $N+2m+2n$. We note that $X_{3, \rm den}(k/qz)=X_{2, \rm den}(z)$.

(ii) From $P(k/qz)=P(z)$, we have $\o{P}(k/qz)=\o{P}(z/q)$ and similarly we have $Q(k/qz)=Q(z)$, $\o{Q}(k/qz)=\o{Q}(z/q)$. Using these relations and equation~\eqref{eq:MK-rel}, we have
\begin{gather*}
X_3(k/qz)=\mu_3(k/qz)P(k/qz)\o{Q}(k/qz)-\o{P}(k/qz)Q(k/qz)\\
\hphantom{X_3(k/qz)}{} =\mu_2(z)P(z)\o{Q}(z/q)-\o{P}(z/q)Q(z)=X_2(z).
\end{gather*}

(iii) Due to the Pad\'e interpolation condition we have $D_2(q^{-s})=0$ for $s=0,\ldots, m+n-1$ and $D_3(q^{-s})=0$ for $s=0,\ldots, m+n$.

From (i)--(iii), we obtain the desired results.\end{proof}

\begin{Theorem}\label{th:cont}
By a suitable gauge transformation $y(z)=g(z) w(z)$, the $L_2$, $L_3$ equations take the following forms
\begin{gather}
L_2\colon \ F(z)\big[k/z^2\big]\o{y}(z/q)-G(z)A(k/z)y(z/q)+G(k/z)A(z) y(z)=0,\nonumber\\
L_3\colon \ \o{F}(z)\big[k/qz^2\big] y(z)-G(z)B(k/qz)\o{y}(z)+G(k/qz)B(z)\o{y}(z/q)=0,\label{eq:L2L3f}
\end{gather}
where $F(z)$, $G(z)$ are given by equations~\eqref{eq:F}, \eqref{eq:G}, and
\begin{gather*}
 A(z)=\prod_{i=1}^{N+1}[z/a_i], \qquad \{a_i\}_{i=1}^{N+1}=\big\{u_1q^m, u_2q^{-n}, u_3, \ldots, u_N,q\big\},\\
 B(z)=\prod_{i=1}^{N+1}[z/b_i], \qquad \{b_i\}_{i=1}^{N+1} = \big\{k/u_{N+1}, \ldots, k/u_{2N}, q^{-m-n}\big\}.
\end{gather*}
\end{Theorem}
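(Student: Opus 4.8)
The plan is to insert the closed forms of $D_1,D_2,D_3$ from Lemmas~\ref{lemma:X1} and~\ref{lemma:X23} directly into the Casorati relations~\eqref{eq:L23D}, strip off the common transcendental factor $\psi$, and then choose the scalar gauge $g(z)$ so that all remaining denominators $X_{i,\mathrm{den}}$ and the ratios $\mu_i$ disappear. Every coefficient of $L_2$ carries a $\psi$: using $\psi(z/q)=\mu_1(z)\psi(z)$ one factors $\psi(z)$ out of the three terms (and likewise $\o{\psi}(z)=\mu_3(z)\psi(z)$ in $L_3$), reducing each equation to a three-term relation among theta quotients. Substituting $w=y/g$ and multiplying $L_2$, resp.\ $L_3$, by an overall factor $M_2(z)$, resp.\ $M_3(z)$, I fix $M_2,M_3$ by requiring the coefficient containing $F$ to collapse to exactly $F(z)[k/z^2]$, resp.\ $\o{F}(z)[k/qz^2]$. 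Since that coefficient already carries $[z,k/z]_{m+n}[k/z^2]/X_{1,\mathrm{den}}(z)$, this gives $M_2/\o{g}(z/q)=X_{1,\mathrm{den}}(z)/[z,k/z]_{m+n}$ and an analogous value of $M_3$.

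The remaining two coefficients of each equation must then be shown to reduce to the advertised products. The computation rests on three elementary facts: the telescoping identities $[z/q]_{m+n+1}/[z]_{m+n}=[z/q]$ and $[z]_{m+n+1}/[z]_{m+n}=[q^{m+n}z]$; the reflection $[u_{N+i}z/k]/[k/u_{N+i}z]=-u_{N+i}z/k$; and the constraint $\prod_{i=1}^{2N}u_i=k^N$ together with~\eqref{eq:MK-rel}. For the $y(z)$-coefficient of $L_2$ one forms $X_{1,\mathrm{den}}(z)\,\mu_1(z)/X_{3,\mathrm{den}}(z/q)$: the denominator products and the $\mu_1$-quotient cancel almost entirely, the reflection converts $\prod_{i}[u_{N+i}z/k]/\prod_{i}[k/u_{N+i}z]$ into $(-z)^N(\prod_{i\le N}u_i)^{-1}$ using the constraint, and after a further reflection the surviving factor is $[z/u_1][z/u_3]\cdots[z/u_N]\,[q^{n}z/u_2]\,[z/q]$ (the spurious $[z/u_2]$ having cancelled). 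This already reproduces every zero of $A$ except that $u_1$ appears where $u_1q^m$ is wanted.

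That last discrepancy is removed by the gauge: matching forces $\o{g}(z/q)/g(z)=[q^{-m}z/u_1]/[z/u_1]$, which turns $[z/u_1]$ into $[q^{-m}z/u_1]$ (the zero at $u_1$ moving to $u_1q^{m}$) and so produces $A(z)=\prod_{i=1}^{N+1}[z/a_i]$ with the stated zeros; existence of a $g$ solving this first-order shift equation is clear, being a quotient of elliptic Gamma functions. Feeding the same $g$ into $L_3$ and repeating the reduction---now $\o{X}_{1,\mathrm{den}}(z)/(\mu_3(z)X_{3,\mathrm{den}}(z))$ simplifies to $\prod_{i}[u_{N+i}z/k]\,[q^mu_1/z]/[u_1/z]$, and the gauge ratio with two further reflections clears the $u_1$-factors---gives $B(z)\propto\prod_{i=1}^N[zu_{N+i}/k]\,[q^{m+n}z]$, i.e.\ the zeros $k/u_{N+1},\dots,k/u_{2N},q^{-m-n}$. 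The third coefficient of each equation, carrying $A(k/z)$ resp.\ $B(k/qz)$, then matches automatically, this being the $z\to k/z$ covariance guaranteed by the $k$-antisymmetry of $X_{1,\mathrm{num}}$ and the relation $X_{1,\mathrm{den}}(k/z)=\mu_1(z)^{-1}X_{1,\mathrm{den}}(z)$ of Lemma~\ref{lemma:X1}.

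The main obstacle is not conceptual but the careful bookkeeping of every theta factor and, above all, every quasi-periodicity; the telescoping identities and reflections must be applied consistently so that the residual products hit the prescribed zero sets $\{a_i\}$ and $\{b_i\}$ exactly. A secondary point is that the reduction of $B$ produces an extra scalar (a power $q^m$ in the normalization above) relative to the monic product $\prod_{i}[z/b_i]$; this constant, together with the free constants $C,C'$ of~\eqref{eq:F},~\eqref{eq:G} and the multiplicative freedom in $g$, is absorbed once by a normalization convention so that~\eqref{eq:L2L3f} holds as written. The degree count---$A$ and $B$ each a product of $N+1$ theta factors---and the inherited $k$- and $k/q$-symmetries furnish independent checks on the final forms.
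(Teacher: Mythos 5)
Your proposal is correct and follows essentially the same route as the paper: substitute the closed forms of Lemmas~\ref{lemma:X1} and~\ref{lemma:X23} into~\eqref{eq:L23D}, strip $\psi$, simplify via the telescoping and reflection identities together with the constraint $\prod_{i=1}^{2N}u_i=k^N$, and gauge away the residual $u_1$-factor, absorbing leftover constants (your $q^m$, the paper's $q^{-m}$ and $\prod_{i=1}^N[u_{N+i}/q]/[k/qu_i]$) into the normalization of $F$, $\o{F}$. The only difference is presentational: the paper takes the explicit gauge $g(z)=[u_1/z,u_1qz/k]_m$ (which is $k/q$-symmetric, $g(k/qz)=g(z)$ --- the property your ``automatic'' matching of the $A(k/z)$, $B(k/qz)$ coefficients via $z\mapsto k/z$ covariance tacitly requires), whereas you characterize $g$ implicitly by a first-order shift equation whose canonical elliptic-Gamma-quotient solution is exactly that same function.
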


\begin{proof} First, using Lemmas~\ref{lemma:X1} and~\ref{lemma:X23}, we rewrite the equations~(\ref{eq:L23D}) as
\begin{alignat*}{3}%\label{eq:L23inX}
& L_2\colon \ && F(z)\big[k/z^2\big]\o{w}(z/q)-G(z)[k/qz]\frac{[kq^n/u_2z]}{[k/u_2z]}\prod_{i=1}^N[k/u_iz]w(z/q) &\\
&&& {}+G(k/z)[z/q]\frac{[zq^n/u_2]}{[z/u_2]}\prod_{i=1}^N[z/u_i] w(z)=0,&\\
& L_3\colon \ && \prod_{i=1}^N \frac{[u_{N+i}/q]}{[k/q u_i]} \o{F}(z) \big[k/qz^2\big]w(z)
-G(z)\big[kq^{m+n-1}/z\big]\frac{[z u_1q^{m+1}z/k]}{[z u_1qz/k]}\prod_{i=N+1}^{2N}[u_i/qz]\o{w}(z) &\\
&&& {} +G(k/qz)\big[q^{m+n}z\big]q^m \frac{[z/u_1q^m]}{[z/u_1]}\prod_{i=N+1}^{2N}[zu_i/k]\o{w}(z/q)=0.&
\end{alignat*}
Then, by the gauge transformation $y(z)=[u_1/z,u_1 qz/k]_m w(z)$, we obtain
\begin{gather*}
L_2\colon \ q^{-m}F(z)\big[k/z^2\big]\o{y}(z/q)-G(z)A(k/z)y(z/q)+G(k/z)A(z) y(z)=0,\\
L_3\colon \ \prod_{i=1}^N \frac{[u_{N+i}/q]}{[k/q u_i]} \o{F}(z)\big[k/qz^2\big] y(z)-G(z)B(k/qz)\o{y}(z)+G(k/qz)B(z)\o{y}(z/q)=0.
\end{gather*}
The additional factors in front of $F(z)$, $\o{F}(z)$ can be absorbed into the normalization of $F(z)$, $\o{F}(z)$ by a $z$-independent gauge transformation of~$y(z)$. Hence, we arrive at the desired re\-sults~(\ref{eq:L2L3f}).
\end{proof}

\begin{Remark}\label{remark3.4}An explicit expression of the Pad\'e interpolants $P(z)$, $Q(z)$ is given by the determinant as follows
\begin{gather*}
P(z)=c \left|\begin{matrix}
{\mu}^P_{0,0}&\cdots&{\mu}^P_{0,m}\\
\vdots&\ddots&\vdots\\
{\mu}^P_{m-1,0}&\cdots&{\mu}^P_{m-1,m}\\
\psi_0(z)&\cdots&\psi_m(z)
\end{matrix}
\right|,\qquad
Q(z)=c \left|\begin{matrix}
{\mu}^Q_{0,0}&\cdots&{\mu}^Q_{0,n}\\
\vdots&\ddots&\vdots\\
{\mu}^Q_{n-1,0}&\cdots&{\mu}^Q_{n-1,n}\\
\phi_0(z)&\cdots&\phi_n(z)
\end{matrix}
\right|,
\end{gather*}
where $c$ is a constant and
\begin{gather*}
\psi_j(z)=\frac{[u_1,qu_1/k,k/u_3z,qz/u_3]_j}{[u_1/z,q u_1/z,k/u_3,q/u_3]_j},\qquad \phi_j(z)=\frac{[k/u_2,q/u_2,u_4/z,qu_4z/k]_j}{[k/u_2z,qz/u_2,u_4,qu_4/k]_j},\\
{\mu}^P_{i,j}={}_{2N+6}V_{2N+5}\left(\frac{k}{q},q^{-m-n},\frac{k}{u_1}q^{-j},\frac{k}{u_2}q^{m+n-i-1},\frac{k}{u_3}q^j,\frac{k}{u_4}q^i,\frac{k}{u_5},\dots ,\frac{k}{u_{2N}};q\right),\\
{\mu}^Q_{i,j}={}_{2N+6}V_{2N+5}\left(\frac{k}{q},q^{-m-n},{u_1}q^{m+n-i-1},{u_2}q^{-j},{u_3}q^i,{u_4}q^j,{u_5},\dots ,{u_{2N}};q\right),
\end{gather*}
and ${}_nV_{n-1}$ is the elliptic hypergeometric series \cite{Spi, Zhe} def\/ined by
\begin{gather}
{}_{n+5}V_{n+4}(a_0;a_1,\ldots,a_n;z)=\sum_{s=0}^{\infty}\frac{\big[a_0 q^{2s}\big]}{[a_0]}\prod_{i=0}^{n}\frac{[a_i]_s}{[q a_0/a_i]_s}z^s.
\end{gather}
The proof is completely the same as the case $N=3$ \cite{NTY13}. Application of the explicit formulae to the special solution of the isomonodromic systems will be considered elsewhere.
\end{Remark}

\section{Compatibility conditions}\label{section4}
In this section, we consider the equation~(\ref{eq:L2L3f}) forgetting about the connection with the interpolation problem. Namely, we restart with the following equations
\begin{alignat}{3}
& L_2\colon \ && F(z)\big[k/z^2\big]\o{y}(z/q)-G(z)A(k/z)y(z/q)+G(k/z)A(z) y(z)=0,& \nonumber\\
& L_3 \colon \ && \o{F}(z)\big[k/qz^2\big] y(z)-G(z)B(k/qz)\o{y}(z)+G(k/qz)B(z)\o{y}(z/q)=0, & \nonumber\\
&&& F(z)=C z \prod_{i=1}^{N-2} [z/\lambda_i, k/z\lambda_i],\qquad G(z)=z \prod_{i=1}^{N-1}[z/\xi_i], & \nonumber\\
 &&& A(z)=\prod_{i=1}^{N+1}[z/a_i],\qquad B(z)=\prod_{i=1}^{N+1}[z/b_i],& \label{eq:Lax23}
\end{alignat}
where $\{a_i, b_i\}_{i=1}^{N+1}$, $k$, $\ell$ are parameters and $C$, $\{\lambda_i\}_{i=1}^{N-2}$, $\{\xi_i\}_{i=1}^{N-1}$ are variables such as
\begin{gather*}
\o{a_i}=a_i, \qquad \o{b_i}=b_i, \qquad
\o{k}=k/q, \qquad \o{\ell}=q \ell,\qquad
k^2 \ell^2=q \prod_{i=1}^{N+1} a_ib_i, \qquad \prod_{i=1}^{N-1}\xi_i=\ell.
\end{gather*}

\begin{Proposition} As the necessary conditions for the compatibility, the pair of equations~$L_2$,~$L_3$ in~\eqref{eq:Lax23} gives the following equations for $\o{\lambda}_i$, $\o{C}$ and $\u{\xi_i}=T^{-1}(\xi_i)$. Namely
\begin{gather}\label{eq:Fup}
F(z)\o{F}(z)\big[k/z^2\big]\big[k/qz^2\big]={G(k/z)G(k/q z)U(z)},
\end{gather}
for $z=\xi_i$ $(1\leq i\leq N-1)$, and
\begin{gather}\label{eq:Gdn}
\frac{G(z)\u{G}(z)}{G(k/z)\u{G}(k/z)}=\frac{U(z)}{U(k/z)},
\end{gather}
for $z=\lambda_i$ $(1\leq i\leq N-2)$, where $U(z)=A(z)B(z)$.
\end{Proposition}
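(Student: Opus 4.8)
The plan is to regard $L_2$ and $L_3$ in \eqref{eq:Lax23} as two linear three-term relations for the single unknown $y$, and to produce the two compatibility conditions by eliminating one shifted value of $y$ and then specializing $z$ to the zeros of the coefficient functions. The decisive structural feature is that $G(z)=z\prod_{i=1}^{N-1}[z/\xi_i]$ vanishes at $z=\xi_i$, while $F(z)=Cz\prod_{i=1}^{N-2}[z/\lambda_i,k/z\lambda_i]$ vanishes at $z=\lambda_i$; condition \eqref{eq:Fup} will emerge from an evaluation at the zeros of $G$, and \eqref{eq:Gdn} from an evaluation at the zeros of $F$.

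For \eqref{eq:Fup} I would first use $L_2$ to solve for $\o{y}(z/q)$,
\[
\o{y}(z/q)=\frac{G(z)A(k/z)y(z/q)-G(k/z)A(z)y(z)}{F(z)\big[k/z^2\big]},
\]
and substitute this into $L_3$. Clearing the denominator $F(z)\big[k/z^2\big]$ and using $A(z)B(z)=U(z)$ to combine the two contributions proportional to $y(z)$, the relation takes the form
\[
\bigl[F(z)\o{F}(z)\big[k/z^2\big]\big[k/qz^2\big]-G(k/z)G(k/qz)U(z)\bigr]\,y(z)+G(z)\Bigl[-F(z)\big[k/z^2\big]B(k/qz)\o{y}(z)+G(k/qz)A(k/z)B(z)y(z/q)\Bigr]=0.
\]
The point is that the coefficient of $y(z)$ is exactly the difference of the two sides of \eqref{eq:Fup}, whereas both remaining terms carry the factor $G(z)$. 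Evaluating at $z=\xi_i$, where $G(\xi_i)=0$, annihilates the second bracket, so that for $y(\xi_i)\neq0$ the coefficient of $y(z)$ must vanish; this is \eqref{eq:Fup}.

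For \eqref{eq:Gdn} I would instead pair $L_2$ with the $T^{-1}$-shift of $L_3$. Since $\u{k}=qk$, one has $\big[k/qz^2\big]\mapsto\big[k/z^2\big]$ and $B(k/qz)\mapsto B(k/z)$, so $T^{-1}(L_3)$ reads
\[
F(z)\big[k/z^2\big]\u{y}(z)-\u{G}(z)B(k/z)y(z)+\u{G}(k/z)B(z)y(z/q)=0.
\]
Evaluating both $L_2$ and this down-shifted equation at $z=\lambda_i$, where $F(\lambda_i)=0$, removes the term carrying $F(z)$ in each (the $\o{y}(z/q)$-term in $L_2$ and the $\u{y}(z)$-term in $T^{-1}(L_3)$), leaving two expressions for one ratio,
\[
\frac{y(\lambda_i/q)}{y(\lambda_i)}=\frac{G(k/\lambda_i)A(\lambda_i)}{G(\lambda_i)A(k/\lambda_i)}=\frac{\u{G}(\lambda_i)B(k/\lambda_i)}{\u{G}(k/\lambda_i)B(\lambda_i)}.
\]
Equating the last two fractions and regrouping $G$, $\u{G}$ against $A$, $B$ (again via $U=AB$) yields \eqref{eq:Gdn} at $z=\lambda_i$.

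The eliminations themselves are routine; the genuine content lies in recognizing the clean separation in which every term other than the sought combination is divisible by $G(z)$ (respectively $F(z)$), so that the evaluation-at-zeros step forces the \emph{coefficient}, not $y$, to vanish. The points I would be most careful about are the bookkeeping of the $k$-dependence of $F$, $G$, $A$, $B$ under $T$ and $T^{-1}$ in the explicit theta factors, and the harmless genericity assumption that $y$ does not vanish at $z=\xi_i$ or at $z=\lambda_i,\lambda_i/q$. Finally, the constraints $k^2\ell^2=q\prod_{i=1}^{N+1}a_ib_i$ and $\prod_{i=1}^{N-1}\xi_i=\ell$ are what make the two sides of \eqref{eq:Fup} and \eqref{eq:Gdn} share the same quasi-periodicity and degree, so that these pointwise identities constitute the correct set of necessary conditions.
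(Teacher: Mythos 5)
Your proposal is correct and takes essentially the same approach as the paper: for \eqref{eq:Fup} you merely perform the elimination of $\o{y}(z/q)$ between $L_2$ and $L_3$ before specializing to $z=\xi_i$, which is equivalent to the paper's direct evaluation of both equations at the zeros of $G$; for \eqref{eq:Gdn} you proceed exactly as the paper does, evaluating $L_2$ and $\u{L_3}$ at $z=\lambda_i$ and equating the two resulting expressions for $y(\lambda_i/q)/y(\lambda_i)$.
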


\begin{proof} When $z=\xi_i$, the terms in $L_2$, $L_3$ with coef\/f\/icient $G(z)$ vanishes, and we obtain equation~(\ref{eq:Fup}). Similarly, putting $z=\lambda_i$ in $L_2$ and
\begin{gather*}
\u{L_3}\colon \ F(z)\big[k/z^2\big] \u{y}(z)-\u{G}(z)B(k/z){y}(z)+\u{G}(k/z)B(z){y}(z/q)=0,
\end{gather*}
the terms with coef\/f\/icient $F(z)$ vanishes, and we have equation~(\ref{eq:Gdn}).
\end{proof}

The equations~(\ref{eq:Fup}), (\ref{eq:Gdn}) give the evolution equation for $2(N-2)$ variables $\{\lambda_i, \xi_i\}_{i=1}^{N-2}$.
In $N=3$ case, it can be written in a symmetric way as
\begin{gather*}
\frac{\xi_1^2}{\xi_2^2}\prod_{j=1}^2 \frac{[\xi_1/\lambda_j][\xi_1/\o{\lambda_j}]}{[\xi_2/\lambda_j][\xi_2/\o{\lambda_j}]}=\frac{U(\xi_1)}{U(\xi_2)}, \qquad
\frac{\lambda_1^2}{\lambda_2^2}\prod_{j=1}^2 \frac{[\lambda_1/\xi_j][\lambda_1/\u{\xi_j}]}{[\lambda_2/\xi_j][\lambda_2/\u{\xi_j}]}=\frac{U(\lambda_1)}{U(\lambda_2)},
\end{gather*}
where $\lambda_1\lambda_2=k$, $\xi_1\xi_2=\ell$. This is the elliptic Painlev\'e equation \cite{ORG, Sakai} in factorized form \cite{KNY15, NTY13}. Its Lax pair is obtained in \cite{Yamada09-2}, and the higher-order analogues are also given in~\cite{Rains11}.

\begin{Theorem}\label{th:44}
The equations~\eqref{eq:Fup}, \eqref{eq:Gdn} are sufficient for the compatibility of the Lax pair~\eqref{eq:Lax23}.$\!$
\end{Theorem}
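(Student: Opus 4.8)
The plan is to reduce the compatibility of $L_2$ and $L_3$ to the vanishing of a single $p$-periodic elliptic function $\chi(z)$, and then to force $\chi\equiv 0$ by a Liouville-type argument in which \eqref{eq:Fup} and \eqref{eq:Gdn} appear exactly as the conditions that $\chi$ be holomorphic. Concretely, I would use $L_2$ in \eqref{eq:Lax23} to solve for the whole time evolution, expressing $\o y(z/q)$ and (after the shift $z\mapsto qz$) $\o y(z)$ as explicit combinations of $y(z/q)$, $y(z)$, $y(qz)$ with coefficients built from $F$, $G$, $A$ and the theta factors $\big[k/z^2\big]$. Substituting both into $L_3$ eliminates every barred quantity and leaves a single scalar relation; compatibility is precisely the statement that this relation holds, i.e.\ that the resulting eliminant $\chi(z)$ vanishes identically. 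Because the whole construction is of ``contiguous type'', $\chi$ is manifestly $p$-periodic, and its only possible singularities come from the two divisions performed in the elimination, namely by the leading coefficient $F(z)$ of $L_2$ and by the coefficient $G(z)$ that multiplies $y(z/q)$.

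The heart of the argument is then to read off the poles of $\chi$ and to recognise \eqref{eq:Fup}, \eqref{eq:Gdn} as the regularity conditions there. At a zero $z=\xi_i$ of $G$ the three-term relations $L_2$, $L_3$ both lose their $G(z)$-term, so the residue of $\chi$ at $\xi_i$ is governed by the surviving terms; its vanishing is exactly $F(z)\o F(z)\big[k/z^2\big]\big[k/qz^2\big]=G(k/z)G(k/qz)U(z)$ at $z=\xi_i$, which is \eqref{eq:Fup}. Symmetrically, at a zero $z=\lambda_i$ of $F$ one must pair $L_2$ with the backward-shifted equation $\u{L_3}$ used in the Proposition; the vanishing of the residue there is \eqref{eq:Gdn}. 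Thus the Proposition's two families of necessary conditions are precisely the statements that $\chi$ has no pole at the zeros of $G$ and of $F$ respectively.

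Granting \eqref{eq:Fup} and \eqref{eq:Gdn}, $\chi$ is a holomorphic $p$-periodic function whose quasi-periodicity factors cancel (a theta quotient of total degree zero), hence an elliptic function with no poles, hence constant. To see that the constant is zero I would use the reflection symmetry inherited from the $k/q$-symmetry of $P$, $Q$ and the relations \eqref{eq:MK-rel}: as in the derivation of Lemma~\ref{lemma:X1}, where $X_{1,\rm num}$ turned out to be $k$-antisymmetric, the eliminant $\chi$ is odd under $z\mapsto k/z$ (up to the explicit theta factors), and an odd holomorphic elliptic function must vanish at the fixed points of the reflection, hence vanishes identically. This gives $\chi\equiv 0$, which is the compatibility of \eqref{eq:Lax23}.

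The step I expect to be the main obstacle is the pole bookkeeping in the second paragraph: one must verify that the zeros of $F$ and of $G$ really are the \emph{only} candidate poles of $\chi$ --- that the spurious factors $\big[k/z^2\big]$, $\big[k/qz^2\big]$, $A$, $B$ introduced on clearing denominators cancel in pairs, using the quasi-periodicities computed from \eqref{eq:F}, \eqref{eq:G} together with the constraint $k^2\ell^2=q\prod_i a_ib_i$, so that no regularity condition beyond \eqref{eq:Fup}, \eqref{eq:Gdn} survives and the residues at $\xi_i$, $\lambda_i$ collapse \emph{cleanly} to those two displayed equations rather than to some stronger relation. Pinning down the exact reflection (whether it is $z\mapsto k/z$ or $z\mapsto k/qz$ under which $\chi$ is odd) is the other delicate point, since it is what excludes a nonzero constant and thereby closes the argument.
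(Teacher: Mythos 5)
Your reduction of compatibility to the identical vanishing of an eliminant $\chi(z)$ is not correct, and this is a genuine structural gap rather than a bookkeeping issue. Eliminating $\o{y}(z/q)$ and $\o{y}(z)$ from $L_3$ by means of $L_2(z)$ and $L_2(qz)$ does not leave a scalar function of $z$: it leaves the three-term relation $L_1$ of \eqref{eq:L1L1u} among $y(z/q)$, $y(z)$, $y(qz)$, which is a genuine, nonvanishing linear equation satisfied by $y$ (demanding that it hold identically in $y$ would force its coefficients, hence $F$ and $G$, to vanish). Compatibility of a contiguous-type pair is instead the consistency of the triangle relations on the seven points of the $3\times 3$ grid, i.e., the statement that the complementary eliminant $\tilde{L}_1$ --- the three-term relation for $\o{y}$ obtained from $L_3(z)$, $L_2(qz)$, $L_3(qz)$ --- coincides with the bar-shift $\o{L_1}$. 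Your proposal never forms this second eliminant, so there is no well-defined $\chi$ whose vanishing expresses compatibility; and your residue discussion at $z=\xi_i$ and $z=\lambda_i$ (pairing $L_2$ with $\u{L_3}$) reproduces the \emph{necessity} argument of the Proposition, not a proof of sufficiency.

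Even if one repairs the reduction --- the outer coefficients of $\o{L_1}$ and $\tilde{L}_1$ match by inspection, so everything hinges on the single identity $\o{R(z)}=\tilde{R}(z)$ with $R$, $\tilde{R}$ as in \eqref{eq:R} --- your Liouville step is underpowered. Granting \eqref{eq:Fup}, $R(z)$ and $\tilde{R}(z)$ are indeed holomorphic (their only denominators are $G(z)$ and $G(k/qz)$; in particular there are \emph{no} poles at the zeros of $F$, so \eqref{eq:Gdn} cannot enter as a residue condition there), but they are theta functions of degree $4N+2$, not elliptic functions of degree zero. Holomorphy plus oddness under the relevant reflection ($z\mapsto k/qz$ for $R$, $z\mapsto k/q^2z$ for $\tilde{R}$) kills only a one-dimensional ambiguity, whereas for $N\geq 3$ the space of odd holomorphic theta functions of this degree has dimension of order $2N$, so ``holomorphic odd elliptic $\Rightarrow$ constant $\Rightarrow 0$'' does not close the argument. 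The paper supplies the missing $O(N)$ constraints by an interpolation-type uniqueness: $R$ (resp.\ $\tilde{R}$) is pinned down by the degenerations of $L_1$ (resp.\ $\tilde{L}_1$) at the zeros of $\big[k/z^2\big]$, $\big[k/q^2z^2\big]$, $F(z)$, $F(qz)$ (resp.\ their barred analogues), together with divisibility by $\big[k/qz^2\big]$ (resp.\ $\big[k/q^2z^2\big]$); and it is precisely in rewriting the conditions at the zeros of $\o{F}(z)$, $\o{F}(qz)$ on the $\tilde{R}$ side into the form matching the characterization of $\o{R}$ that \eqref{eq:Gdn} is actually used. Without these pointwise evaluation conditions, your argument establishes nothing beyond the vanishing of a constant, and the identity $\o{R(z)}=\tilde{R}(z)$ --- hence $\o{L_1}=\tilde{L}_1$ --- remains unproved.
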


\begin{proof}
Combining the equations $L_2$ and $L_3$ as the following diagrams:

\begin{center}\setlength{\unitlength}{1mm}
\begin{picture}(50,30)(-5,-5)
\put(-1,23){$\o{y}(\frac{z}{q})$}
\put(20,23){$\o{y}(z)$}
\put(-1,-3){$y(\frac{z}{q})$}
\put(20,-3){$y(z)$}
\put(41,-3){$y(qz)$}
\put(0,0){\line(1,0){20}}
\put(0,0){\line(0,1){20}}
\put(0,20){\line(1,-1){20}}
\put(1,21){\line(1,0){20}}
\put(21,1){\line(0,1){20}}
\put(1,21){\line(1,-1){20}}
\put(3,4){$L_2(z)$}
\put(10,14){$L_3(z)$}
\put(22,0){\line(1,0){20}}
\put(22,0){\line(0,1){20}}
\put(22,20){\line(1,-1){20}}
\put(24,4){$L_2(qz)$}
\put(-12,-3){$L_1$:}
\end{picture}
\qquad\qquad
\begin{picture}(50,30)(-5,-5)
\put(-1,23){$\o{y}(\frac{z}{q})$}
\put(20,23){$\o{y}(z)$}
\put(41,23){$\o{y}(qz)$}
\put(20,-3){$y(z)$}
\put(41,-3){$y(qz)$}
\put(23,21){\line(1,0){20}}
\put(43,1){\line(0,1){20}}
\put(23,21){\line(1,-1){20}}
\put(1,21){\line(1,0){20}}
\put(21,1){\line(0,1){20}}
\put(1,21){\line(1,-1){20}}
\put(32,14){$L_3(qz)$}
\put(10,14){$L_3(z)$}
\put(22,0){\line(1,0){20}}
\put(22,0){\line(0,1){20}}
\put(22,20){\line(1,-1){20}}
\put(24,4){$L_2(qz)$}
\put(-12,23){$\tilde{L}_1$:}
\end{picture}
\end{center}
we obtain the following three term relations for $y(z)$ or $\o{y}(z)$,
\begin{alignat}{3}
& L_1\colon \ && A(k/z) B(z)F(q z)\big[k/q^2 z^2\big]y (z/q ) -R(z)y(z) +A(q z)B(k/q z)F(z)\big[k/z^2\big] y(q z) =0,& \nonumber\\
& \tilde{L}_1\colon \ && A (k/qz) B(z)\o{F}(q z)\big[k/q^3 z^2\big]\o{y} (z/q) -\tilde{R}(z)\o{y}(z)&\nonumber\\
&&& {} +A(q z)B\big(k/q^2 z\big)\o{F}(z)\big[k/qz^2\big] \o{y}(q z) =0,&\label{eq:L1L1u}
\end{alignat}
where
\begin{gather}
R(z)=\frac{U(z) F(q z) G\big(\frac{k}{z}\big) \big[\frac{k}{q^2z^2}\big]}{G(z)}
+\frac{U\big(\frac{k}{q z}\big)F(z) G(q z) \big[\frac{k}{z^2}\big] }{G\big(\frac{k}{q z}\big)} \nonumber\\
\hphantom{R(z)=}{} -\frac{F(z)F(q z) \o{F}(z) \big[\frac{k}{z^2}\big]\big[\frac{k}{q z^2}\big]\big[\frac{k}{q^2z^2}\big]}{G(z) G\big(\frac{k}{q z}\big)},\nonumber\\
\tilde{R}(z)=\frac{U(qz) \o{F}(z) G\big(\frac{k}{q^2z}\big) \big[\frac{k}{q^2z^2}\big]}{G(qz)}
+\frac{U\big(\frac{k}{q z}\big)\o{F}(qz) G(z) \big[\frac{k}{q^3z^2}\big] }{G\big(\frac{k}{q z}\big)}\nonumber\\
\hphantom{\tilde{R}(z)=}{} -\frac{F(z) F(q z)\o{F}(z) \big[\frac{k}{z^2}\big]\big[\frac{k}{q z^2}\big]\big[\frac{k}{q^2z^2}\big]}{G(z) G\big(\frac{k}{q z}\big)}.\label{eq:R}
\end{gather}

Then the compatibility means the consistency
between triangle relations on the following $7$ points of the $3 \times 3$ grid:
\begin{center}\setlength{\unitlength}{0.8mm}
\begin{picture}(50,30)(-5,-18)
\put(0,0.5){\line(1,0){20}}
\put(0,0.5){\line(0,1){20}}
\put(0,20.5){\line(1,-1){20}}
\put(0.5,20.5){\line(1,0){20}}
\put(20.5,0.5){\line(0,1){20}}
\put(0.5,20.5){\line(1,-1){20}}
\put(21,0.5){\line(1,0){20}}
\put(21,0.5){\line(0,1){20}}
\put(21,20.5){\line(1,-1){20}}
\put(0,0){\line(1,0){20}}
\put(20,-20){\line(0,1){20}}
\put(0,0){\line(1,-1){20}}
\put(20.5,-20){\line(1,0){20}}
\put(20.5,-20){\line(0,1){20}}
\put(20.5,0){\line(1,-1){20}}
\put(21,0){\line(1,0){20}}
\put(41,-20){\line(0,1){20}}
\put(21,0){\line(1,-1){20}}
\end{picture}
\end{center}
which is equivalently written as $\o{L_1}=\tilde{L}_1$. It is easy to check the condition $\o{L_1}=\tilde{L}_1$ for the coef\/f\/icients of $y(q z)$, $y(z/q)$ and $\o{y}(q z)$, $\o{y}(z/q)$. So the problem is to show $\o{R(z)}=\tilde{R}(z)$ under the equations (\ref{eq:Fup}), (\ref{eq:Gdn}).

For $L_1$ given in equation~(\ref{eq:L1L1u}), (\ref{eq:R}), one can check the following properties:
\begin{itemize}\itemsep=0pt
\item[(i)] $R(z)$ is holomorphic due to equation~(\ref{eq:Fup}), and it is a degree $4N+2$ theta function of base~$p$.
\item[(ii)] $R(k/qz)=-R(z)$, and hence $R(z)$ is divisible by $\big[k/qz^2\big]$.
\item[(iii)] The equation $L_1$ holds when
\begin{gather*}
\big\{\big[k/z^2\big]=0, \, y(z)=y(z/q)\big\} \quad {\rm or}\quad \big\{\big[k/q^2z^2\big]=0, \, y(z)=y(qz)\big\} \quad {\rm or}\\
\{F(z)=0, \, A(z)G(k/z)y(z)=A(k/z)G(z)y(z/q)\} \quad {\rm or}\\
\{F(qz)=0, \, A(qz)G(k/qz)y(qz)=A(k/qz)G(qz)y(z)\}.
\end{gather*}
\end{itemize}
Moreover, once the coef\/f\/icients of $y(q z)$, $y(z/q)$ in $L_1$ are f\/ixed as in equation~(\ref{eq:L1L1u}), the properties (i)--(iii) determine the coef\/f\/icient $R(z)$ uniquely.

Similarly, $\tilde{R}(z)$ in equation~(\ref{eq:L1L1u}) is characterized by the following conditions:
\begin{itemize}\itemsep=0pt
\item[(i)] $\tilde{R}(z)$ is a degree $4N+2$ theta function of base $p$.
\item[(ii)] $\tilde{R}\big(k/q^2z\big)=-\tilde{R}(z)$, and hence $\tilde{R}(z)$ is divisible by $\big[k/q^2z^2\big]$.
\item[(iii)] The equation $\tilde{L}_1$ holds when
\begin{gather*}
\big\{\big[k/qz^2\big]=0, \, \o{y}(z)=\o{y}(z/q)\big\} \quad {\rm or}\quad
\big\{\big[k/q^3z^2\big]=0, \, \o{y}(z)=\o{y}(qz)\big\} \quad {\rm or}\\
\big\{\o{F}(z)=0, \, A(z)\o{G}(k/qz)\o{y}(z)=A(k/qz)\o{G}(z)\o{y}(z/q)\big\} \quad {\rm or}\\
\big\{\o{F}(qz)=0, \, A(qz)\o{G}\big(k/q^2z\big)\o{y}(qz)=A(k/q^2z)\o{G}(qz)\o{y}(z)\big\},
\end{gather*}
where we used the equation~(\ref{eq:Gdn}) to rewrite the last two equations.
\end{itemize}
These characteristic properties show that $\o{R(z)}=\tilde{R}(z)$, hence
$\o{L_1}=\tilde{L}_1$ as desired.
\end{proof}

\subsection*{Acknowledgments}

The author is grateful to the organizers and participants of the lecture series at the university of Sydney (November~28--30, 2016) and the ESI workshop ``Elliptic Hypergeometric Functions in Combinatorics, Integrable Systems and Physics'' (Vienna, March 20--24, 2017) for their interests and discussions. He also thanks to referees for valuable comments and Dr.~H.~Nagao for discussions. This work is partially supported by JSPS KAKENHI (26287018).

\pdfbookmark[1]{References}{ref}
\LastPageEnding

\end{document}